\newcommand{\R}{\ensuremath{{\mathbb R}}}
\newcommand{\Z}{\ensuremath{{\mathbb Z}}}
\newcommand{\N}{\ensuremath{{\mathbb N}}}
\newcommand{\eps}{\ensuremath{\varepsilon}}
\newcommand{\X}{\mathcal{X}}
\newcommand{\nxt}{\ensuremath{\mathrm{next}}}
\newenvironment{proof}%
{\noindent\emph{Proof}.\hspace{1ex}}%
{\hfill\unitlength=0.18ex%
  \begin{picture}(12,12)
    \put(1,1){\framebox(9,9){}}
    \put(1,4){\framebox(6,6){}}
  \end{picture}\linebreak
}
\newtheorem{theorem}{Theorem}
\newtheorem{lemma}{Lemma}
\newcommand{\COMMENT}[1]{}
\def\coolfigure#1#2#3{%
  \begin{figure}[tbh]
    \topsep 0pt
    \begin{center}
      \leavevmode#1
    \end{center}
    \caption{#2}
    \label{#3}
  \end{figure}
}
\newcommand{\epsfigure}[4][3cm]{\coolfigure%
  {\includegraphics[clip,width=#1]{#2}}{#3}{#4}}
\begin{document}

\title{Computing $k$-Centers On a Line}

\author{
  Peter Brass\thanks{Department of Computer Science, City College, New  York, USA. Email: {\tt peter@cs.ccny.cuny.edu}} \and
  Christian Knauer\thanks{Institute of Computer Science, Free University Berlin,  Germany. Email: {\tt knauer@inf.fu-berlin.de}}\and
  Hyeon-Suk Na\thanks{School of Computing, Soongsil University, Seoul, Korea.  Email: {\tt hsnaa@ssu.ac.kr}. Corresponding author. Supported by Korean Research Foundation Grant(KRF-2007-531-D00018), and the
Soongsil University Research Fund.} \and
Chan-Su Shin\thanks{School of Electronics and Information Engineering,  Hankuk~University~of~Foreign~Studies, Korea. Email: {\tt cssin@hufs.ac.kr}. Supported by Korean Research Foundation Grant(KRF-2007-311-D00764), and the
HUFS Research Fund.}  \and
  Antoine Vigneron\thanks{INRA, UR 341 Math\'ematiques
et Informatique Appliqu\'ees,
78352 Jouy-en-Josas, France.
Email: \texttt{antoine.vigneron@jouy.inra.fr}.}
}

\date{}
\maketitle

\begin{abstract}
  In this paper we consider several instances of the \emph{$k$-center on a   line problem} where the goal is, given a set of points $S$  in the plane and a parameter $k\geq 1$, to find  $k$ disks with centers on a line $\ell$ such that their union covers $S$ and  the maximum radius of the disks is minimized. This problem is a constraint version of the well-known $k$-center problem in which the centers are constrained to lie in a particular region such as a segment, a line, and a polygon.

  We first consider the simplest version of the problem where the line  $\ell$ is given in advance; we can solve this problem in $O(n\log^2 n)$  time. We then investigate the cases  where only the orientation of the line $\ell$ is fixed   and where the line $\ell$ can be arbitrary.  We can solve these problems in $O(n^2 \log^2 n)$ time and  in $O(n^4\log^2 n)$ expected time, respectively.  For the last two problems, we present $(1+\eps)$-approximation  algorithms, which run in $O(\frac{1}{\eps}n\log^2 n)$ time and $O({1\over \eps^2}n\log^2 n)$ time, respectively.
\end{abstract}

\section{Introduction}

% CK
A common type of facility location or clustering problem is the $k$-center problem, which is defined as follows: Given a set $S$ of $n$ points in a metric space and a positive integer $k$, find a set of $k$ supply points such that the maximum distance between a point in $S$ and its nearest supply point is minimized. For the cases of the $L_2$ and $L_\infty$-metric the problem is usually referred to as the Euclidean and rectilinear $k$-center problem respectively. The cluster centers (i.e., the supply points) can be seen as an approximation of the set $P$. Drezner~\cite{D95} describes many variations of the facility location problem and their numerous applications.

$k$-center problems (as well as many other clustering problems) arise naturally in many applications (like, e.g., in the shape simplification or in the construction of bounding volume hierarchies).
They can be formulated as geometric optimization problems and, as such, they have been studied extensively in the field of computational geometry.
Efficient polynomial-time algorithms have been found for the planar $k$-center problem when $k$ is a small constant~\cite{Chan99,Epp97}.  Also, the rectilinear $2$-center problem can be solved in polynomial time, even when $d$ is part of the input~\cite{Meg90}.  However both the Euclidean and rectilinear (decision) problems are NP-hard, for $d=2$ when $k$ is part of the input~\cite{FPT81,MS84}, while the Euclidean 2-center and rectilinear 3-center are NP-hard when $d$ is part of the input~\cite{Meg90}. Hwang et al.~\cite{hwang} gave an $n^{O(\sqrt{k})}$-time algorithm for the Euclidean $k$-center problem in the plane. Agarwal and Procopiuc~\cite{agarwal02} presented an $n^{O(k^{1-1/d})}$-time algorithm for solving the $k$-center problem in $\mathbb{R}^d$ under $L_2$ and $L_\infty$-metrics, and a simple $(1+\eps)$-approximation algorithm with running time $O(n\log k)+(k/\eps)^{O(k^{1-1/d})}$.
%Consequently one often has to resort to approximation algorithms~\cite{509947}.

\paragraph{Problems.}
In this paper, we consider a variation of the $k$-center problem
with potential applications in wireless network design: Given $n$
sensors which are modeled as points, we want to locate $k$ base stations (or servers), which are modeled as centers of disks, for receiving the signal
from the sensors. The sensors operate by self-contained batteries,
but the servers should be connected to a power line, so they have
to lie on a straight line which models the power line.
Thus we define the \emph{$k$-center on a line problem} as follows:
\emph{Given a set $S$ of $n$ points (i.e., sensors) and an integer $k\geq 1$, find $k$ disks with centers (i.e., base stations) on a line such that the union of the disks covers $S$ and the maximum radius of the disks is minimized.} We will also study several variants of the problem depending on whether the orientation of the center line is fixed or not.

\paragraph{Related results.}
The problem of finding the $k$-centers on a line is a constrained version of the standard $k$-center problems where the centers are constrained to lie in a specific region such as a line, a segment, and a polygonn.

For $k = 1$, Megiddo~\cite{Meg86} showed that the 1-center constrained to be on a given line can be computed in $O(n)$ time; our problem is a direct extension of this for $k > 1$. Hurtado et al.~\cite{hurtado} considered the 1-center in a convex region instead of the line, and showed that the 1-center lying in the convex polygon of $m$ vertices can be computed in $O(n+m)$ time. If the 1-center is restricted to lie in a set of simple polygons with a total of $m$ edges, then Bose et al.~\cite{bose08} proved that it can be computed in $O((n+m)\log n)$ time, which is an improvement to the previous quadratic time algorithm~\cite{bose}. They also showed how to preprocess the $n$ points in $O(n\log n)$ time and $O(n)$ space such that for any query segment the 1-center lying on the segment can be reported in $O(\log n)$ time.
%Bose et al.~\cite{bose} proved that it is computed in time $O((n+m)\log (n+m) + I)$, where $I$ is the number of intersection points between the farthest-point Voronoi diagram of the $n$ input points and the boundary of the polygon; this value $I$ could be $O(nm)$.

For $k \geq 2$, there are a few results~\cite{das,roy,chansu} which have been done mostly for the base station placement problem in wireless sensor network. Das et al.~\cite{roy} studied more constrained problems where the $k$ centers lie on a specific edge of a convex polygon and the corresponding disks cover all $n$ vertices of the polygon. Their algorithm runs in $O(\min(n^2, nk\log n))$ time. In~\cite{roy, chansu}, they considered different constrained 2-center problems in which the centers lie in a convex polygon~\cite{chansu} or on a pair of specified edges of the convex polygon~\cite{roy}, but the two disks should cover all points in the polygon, i.e., the convex polygon itself.

Alt et al.~\cite{alt} studied a similar problem where the goal is to
minimize the sum of the radii (instead of minimizing the maximum
of the radii); in fact, they minimized the cost function of the form $\sum_i r_i^\alpha$ for any fixed $\alpha \geq 1$ under any fixed $L_p$ metric, where $r_i$ represents the radius of each disk. In their model $k$ is not part of the input. They presented an algorithm that runs in $O(n^c\log n)$ time to compute the optimal solution for a fixed line where $c = 2$ for $\alpha = 1$ and $3\leq c\leq 4$ for $\alpha > 1$. For a horizontal line moving freely, they proved that the value of the optimal solution (as well as the location of the optimal horizontal line) cannot be expressed by radicals, and they gave an $(1+\eps)$-approximation algorithm requiring $O({\frac{1}{\eps}}n^3\log n)$ time. They also presented an $(1+\eps)$-approximation algorithm for arbitrary line which takes $O({\frac{1}{\eps^2} }n^5\log n)$ time.

\paragraph{Our results.}
We investigate the $k$-center problem for arbitrary $k\geq 1$ where the $k$ centers are collinear and should cover a set of $n$ points. In Section~\ref{sec:fixed} we first consider the simplest version of the problem where the line $\ell$ is fixed (we will assume w.o.l.g. that the centers of the $k$ disks must lie on the $x$-axis). We can solve this problem in $O(n\log^2 n)$ time. Next, we consider two problems: In Section~\ref{sec:orienfix}, we look at the case where only the orientation of the line $\ell$ is fixed (we will assume w.l.o.g. that $\ell$ is horizontal), and solve it in $O(n^2 \log^2 n)$ time.  In Section~\ref{sec:arbit}, we study the most general case where the line $\ell$ can have an arbitrary orientation. We solve this problem in $O(n^4\log^2 n)$ expected time. For these two problems, we also present $(1+\eps)$-approximation algorithms, which run in $O(\frac{1}{\eps}n\log^2 n)$ time and $O({1\over \eps^2}n\log^2 n)$ time, respectively. These results are the first on the $k$-center problem on a line for $k\geq 2$. Note here that $k$ is part of the input but not involved in the running time, and all algorithms also work for the problems under any fixed $L_p$ metric. The results are summarized in Table~\ref{tab:results}.

\begin{table}
\label{tab:results}
\begin{minipage}{\textwidth}
\begin{tabular}{p{7cm}||c|c}\hline\hline
 $k$-center problems on a line & Exact algorithm & $(1+\eps)$-approximation \\ \hline
 Centers on a fixed line & $O(n\log^2n)$
 \footnote{$O(n\log n)$ time for $L_1$ and $L_\infty$ metric.} & - \\ \hline
 Centers on a line with fixed orientation & $O(n^2 \log^2 n)$
 & $O((1/ \eps)\,n \log^2n)$ \\ \hline
 Centers on a line with arbitrary orientation & $O(n^4 \log^2 n)$
 & $O((1/ \eps^2)\,n \log^2 n)$ \\ \hline\hline
\end{tabular}
\end{minipage}
 \caption{Summary of the results.}
\end{table}

\paragraph{Preliminaries.}

Let $S = \{s_1, \ldots, s_n\}$ be the set of input points, where
$s_i := (x_i, y_i)$. We assume that no two points in $S$ have the
same $x$-coordinate, no four points lie on the same circle and
that they are sorted in non-decreasing $x$-order. Let $\ell$ be
the line on which the disk centers have to lie. Denote by $D(p,r)$
the disk centered at $p$ with radius $r$; if the radius $r$ is
fixed, we denote this disk by $D(p)$.

\section{$k$-centers on a fixed line}\label{sec:fixed}

We first consider the simplest version of the problem where the
line $\ell$ is fixed.  Without loss of generality, we assume that
$\ell$ is the $x$-axis. Let $y_\mathrm{max} := \max_i |y_i|$. Let
$r^\star$ be the minimum radius such that there exist $k$ disks of
that radius with centers on the $x$-axis and with union covering
$S$. Then it is clear that $r^\star \geq y_\mathrm{max}$. To find
the minimum radius $r^\star$, we will perform a binary search,
combining the results of~\cite{cole,matousek} with an algorithm
for the following decision problem:

 \emph{Given $r \geq y_\mathrm{max}$, decide if there exist $k$ disks of radius $r$ with centers on the $x$-axis and with union covering $S$.}

\subsection{The decision algorithm}

Let $X_i(r)$ be the intersection of $D(s_i, r) \cap \ell$, i.e., $X_i(r) = [a_i(r), b_i(r)]$ is an interval on the $x$-axis with the two endpoints $a_i(r) \leq b_i(r)$. A \emph{piercing set} of the interval set $\X(r) = \{ X_i(r) \}$ is a set of points on $\ell$ such that every interval in $\X(r)$ contains at least one point in the piercing set. We call the minimum cardinality of a piercing set for $\X(r)$ the \emph{minimum piercing number} $k(r)$ of $\X(r)$. This number can be computed in $O(n\log n)$ time by selecting a piercing point at the leftmost right endpoint, removing all the intervals containing the piercing point, and repeating the same process for the remaining intervals until all the intervals are removed, c.f.~\cite{katz,chan}.

It is clear that $S$ can be covered by the union of $k$ disks of radius $r$ with centers on $\ell$ if and only if the minimum piercing number $k(r)$ is not larger than $k$. Thus we can answer the decision problem in $O(n\log n)$ time; if the right endpoints (and the left endpoints) of $X_i(r)$ are sorted then we can even do it in $O(n)$ time.

\subsection{Computing the optimal radius}\label{subsec:fixed}

If the piercing number at $r = y_\mathrm{max}$ is no more than
$k$, we find that $r^\star = y_\mathrm{max}$. So, in what follows,
we assume that $r^\star > y_\mathrm{max}$, and that the intervals
$X_i(r)= [a_i(r), b_i(r)]$ are sorted in non-decreasing order of
their right endpoints $b_i(r)$. As $r$ increases, $a_i(r)$
decreases and $b_i(r)$ increases. We can interpret $a_i(r')$ and
$b_i(r')$ as the $x$-coordinates of intersection points of the
hyperbola $\tilde{X}_i: r^2 - (x-x_i)^2 = y_i^2$ with the
horizontal line $r=r'$ in the $(x,r)$-plane, as illustrated in
Figure~\ref{fig:fixedline}. Since no bisector between $s_i$ and
$s_j$ is horizontal, every bisector intersects the $x$-axis $\ell$
once and thus two hyperbolas $\tilde{X}_i$ and $\tilde{X}_j$ meet
exactly once. If we consider $\tilde{X}_i$ as two curve segments
by cutting it at its lowest point, then we have $2n$ curves with
the property that any two of them intersects at most once.  We
will denote the set of these curves (which are also called
pseudoline segments) by $\tilde{\X}$.

\epsfigure[\textwidth]{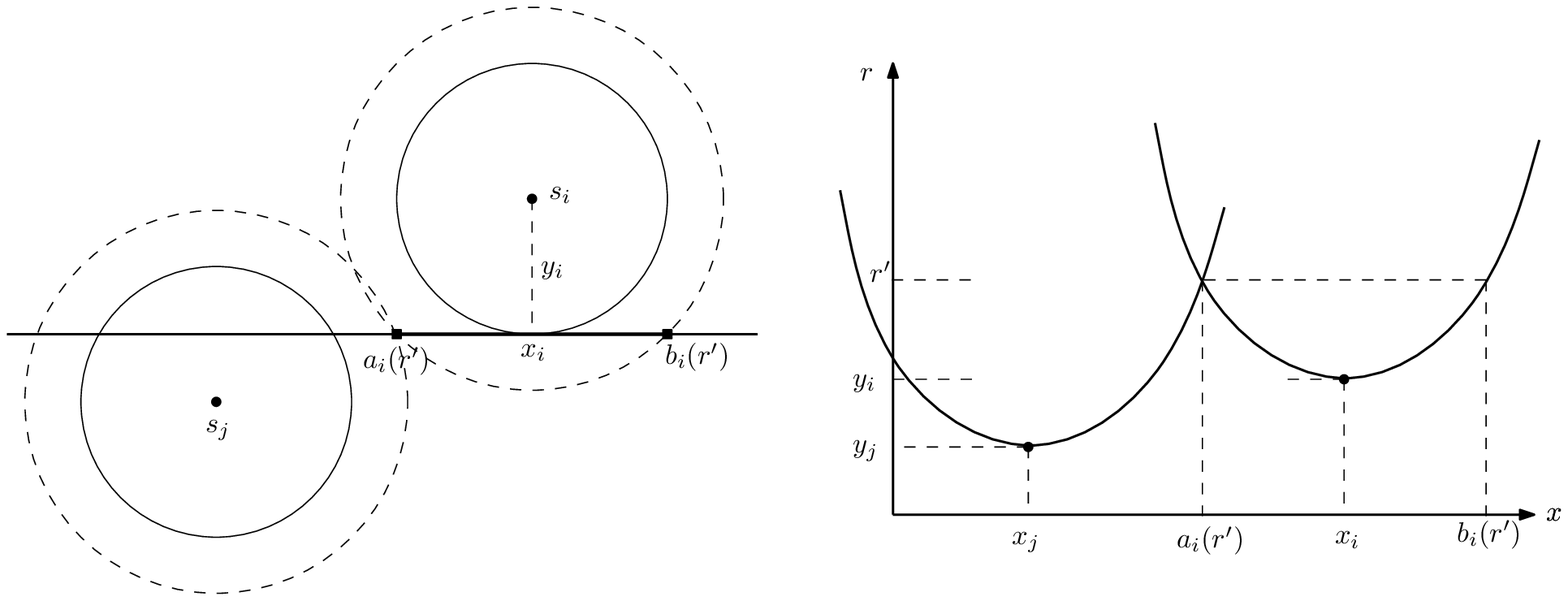}{Intervals defined on a fixed
line.}{fig:fixedline}

As $r$ increases, the minimum piercing number $k(r)$ for the
interval system $\X(r)$ is non-increasing. Our goal is to find the
radius $r^\star$ such that $k(r^\star)$ is the largest integer
with $k(r^\star) \leq k$. Since $k(r)$ can change (i.e., decrease)
only when the relative order of the endpoints of the intervals in
$\X(r)$ changes, we know that $r^\star$ is a radius where two
hyperbolas from $\tilde{\X}$ meet. A straightforward way to find
the radius $r^\star$ is to compute all the radii at which two
hyperbolas meet, sort them, and do a binary search over these
radii with the decision algorithm of the previous  subsection;
sorting all the $O(n^2)$ candidate radii takes $O(n^2\log n)$ time
and the binary search requires $O(n\log^2 n)$ time because it
invokes the decision algorithm $O(\log n)$ times. Thus the total
running time is $O(n^2\log n)$.

We can improve the running time to $O(n\log^2 n)$ by using the
result of~\cite{cole,matousek} which was developed for the
slope-selection problem.
In the slope-selection problem we are given a set of $n$ points in
the plane and an integer $m$, and we have to determine the line of
the $m$-th smallest slope among all the lines spanned by two input
points. In the dual setting, this problem is equivalent to finding
the intersection point with the $m$-th smallest $y$-coordinate
among all intersection points of the lines (dual to the input
points). This problem can be solved in $O(n\log n)$ deterministic
time~\cite{cole,matousek}. Since the curves in $\tilde{\X}$ we are
dealing with are pseudoline segments, we can adopt these
algorithms immediately. Thus we can do one step of the binary
search in $O(n\log n)$ time: choose the median radius $r_{\rm
med}$ by the algorithm of~\cite{cole,matousek}, and decide if
$r^\star \leq r_{\rm med}$ by the decision algorithm of the
previous section. As a result, we can find $r^\star$ in $O(n\log^2
n)$ total time. The result is summarized in the following:
\begin{theorem}\label{th:0dof}
  Let $S$ be a set of $n$ points in $\R^2$ and $k \in \N$.  Let $r^\star$
  be the minimum radius such that there exist $k$ disks of that radius
  with centers on the $x$-axis and with union covering $S$.
  Then $r^\star$ and such disks can be computed in $O(n\log^2 n)$ deterministic
time.
\end{theorem}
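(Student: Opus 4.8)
The plan is to combine the $O(n\log n)$ decision algorithm of the previous subsection with a binary search that is driven by the slope-selection machinery of~\cite{cole,matousek}. First I would dispose of the degenerate case: if the minimum piercing number $k(y_\mathrm{max})$ is at most $k$, then $r^\star = y_\mathrm{max}$ and we are finished after a single call to the decision algorithm; so assume $r^\star > y_\mathrm{max}$. The crucial structural fact, already observed in the text, is that $k(r)$ is non-increasing and piecewise constant in $r$, and that its breakpoints occur only where the left-to-right order of the interval endpoints $a_i(r),b_i(r)$ changes. In the $(x,r)$-plane such an order change happens exactly at a radius where two of the $2n$ curves in $\tilde{\X}$ meet. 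Consequently $r^\star$ is the smallest height (i.e., $r$-coordinate) among the pairwise intersection points of $\tilde{\X}$ at which $k(r)$ first drops to $k$ or below, and I would phrase the whole task as locating this distinguished intersection height.

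It then remains to find this height without materializing all $O(n^2)$ candidates, which would already cost $O(n^2\log n)$ to sort. Instead I would run an implicit binary search over the \emph{ranks} of the intersection heights: maintaining an interval of candidate ranks known to contain the rank of $r^\star$, in each round the selection algorithm of~\cite{cole,matousek} returns, in $O(n\log n)$ time, the intersection point whose height has the median rank in this interval; calling this height $r_\mathrm{med}$, one invocation of the decision algorithm then tests whether $k(r_\mathrm{med})\le k$ and so tells us whether $r^\star$ lies in the lower or the upper half of the current rank interval. Since there are only $O(n^2)$ candidate heights, $O(\log n)$ rounds suffice to isolate $r^\star$; one final run of the piercing algorithm at $r=r^\star$ recovers the $k$ piercing points, and hence the $k$ disks. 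Each round costs $O(n\log n)$ for the selection plus $O(n\log n)$ for computing the intervals $X_i(r_\mathrm{med})$ and answering the decision problem, so the total running time is $O(n\log^2 n)$.

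The step I expect to be the main obstacle is justifying that the slope-selection algorithm of~\cite{cole,matousek}, which is stated for arrangements of \emph{lines}, may be applied to the height-selection problem for $\tilde{\X}$. This rests entirely on $\tilde{\X}$ being a family of pseudoline segments: because no bisector of two input points is horizontal, each pair of hyperbolas $\tilde{X}_i,\tilde{X}_j$ meets exactly once, and cutting every hyperbola at its lowest point yields $2n$ curves of which any two cross at most once. Granting this, the counting and selection subroutines underlying the slope-selection algorithm---essentially inversion counting over the permutation induced by the endpoint order---carry over unchanged, since they exploit only the crossing structure of the arrangement and not the straightness of the curves. Establishing this correspondence carefully is the heart of the argument; once it is in place, correctness of the binary search follows immediately from the monotonicity of $k(r)$, and the $O(n\log^2 n)$ bound is immediate.
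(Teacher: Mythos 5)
Your proposal is correct and follows essentially the same route as the paper: reduce to the decision problem, observe that $r^\star$ occurs at an intersection of the pseudoline segments in $\tilde{\X}$, and drive a binary search by selecting median-rank intersection heights with the slope-selection machinery of~\cite{cole,matousek}, which applies because any two curves in $\tilde{\X}$ cross at most once. The only cosmetic difference is that you make the rank-interval bookkeeping explicit, which the paper leaves implicit.
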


\paragraph{Remark.}

We can compute $r^\star$ in $O(n\log n)$ time if we use the $L_1$
or the $L_\infty$ metric.  In these cases the functions $a_i(r)$
and $b_i(r)$ are linear functions with a slope of $-45^\circ$ and
$+45^\circ$, respectively. There are no intersections among the
functions $a_i(r)$ and no intersections among the functions
$b_i(r)$. Consequently the sorted sequence of the $a_i(r)$'s
(resp., $b_i(r)$'s) defined at $r = y_\mathrm{max}$ remains the
same for any $r \geq y_\mathrm{max}$. So once we have the sorted
sequences at the beginning of the algorithm, we do not need to
sort the endpoints again when we make decision for a fixed radius
during the binary search; this results in a linear-time decision
algorithm for a fixed radius $r$.

A radius $r$ will be potentially tested during the binary search only if
$b_i(r) = a_j(r)$ for some $i \neq j$. We can represent these radii
implicitly in a doubly sorted matrix as follows:
Let $a_{\pi(1)}, a_{\pi(2)}, \ldots, a_{\pi(n)}$ be the increasing sequence
of $a_i(r)$ for some permutation $\pi$ at $r= y_\mathrm{max}$, and let $b_{\sigma(1)}, b_{\sigma(2)}, \ldots, b_{\sigma(n)}$ be the decreasing sequence of $b_i(r)$ for some permutation $\sigma$ at $r = y_\mathrm{max}$.
We define $r(\pi(i),\sigma(j))$ as the radius $r$ such that
$a_{\pi(i)}(r) = b_{\sigma(j)}(r)$; if $r(\pi(i),\sigma(j)) <
y_\mathrm{max}$ then we set $r(\pi(i),\sigma(j)):= y_\mathrm{max}$.  It is easy to see that $r(\pi(i),\sigma(j)) < r(\pi(i),\sigma(l))$ for any $l > j$, and $r(\pi(i),\sigma(j)) > r(\pi(h),\sigma(j))$ for any $h < i$.
Consequently, if we put the $r(\pi(i),\sigma(j))$ into an $n\times n$
matrix with rows (ordered by $\pi$) representing $a_{\pi(i)}$ and columns
(ordered by $\sigma$) representing $b_{\sigma(j)}$ we get a matrix in which
each row and each column is totally ordered. With this doubly sorted matrix,
we can search the $m$-th smallest entry in the matrix in $O(n)$ time~\cite{fredjohn} (in particular we need to access only $O(n)$ matrix entries). Using the $O(n)$-time decision algorithm outlined above, we can perform the binary search for $r^\star$ in $O(n\log n)$ total time.

\begin{theorem}\label{th:0dofl1}
  Let $S$ be a set of $n$ points in $\R^2$ and $k \in \N$.  Let $r^\star$
be the minimum radius such that there exist $k$ $L_1$-disks (resp.
$L_\infty$-disks) of that radius, with centers on the $x$-axis and
with union covering $S$.  Then $r^\star$ and such disks can be
computed in $O(n\log n)$ deterministic time.
\end{theorem}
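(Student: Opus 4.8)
The plan is to exploit the fact that, under the $L_1$ and $L_\infty$ metrics, the two endpoint functions degenerate to lines in $r$, which collapses almost all of the work done in the general case. First I would compute the endpoints explicitly: for $s_i=(x_i,y_i)$ and $r\ge |y_i|$ the interval $X_i(r)=D(s_i,r)\cap\ell$ has $a_i(r)=x_i+|y_i|-r$ and $b_i(r)=x_i-|y_i|+r$ under $L_1$, and $a_i(r)=x_i-r$, $b_i(r)=x_i+r$ under $L_\infty$. In either case $a_i$ is a line of slope $-1$ and $b_i$ a line of slope $+1$ in the $(x,r)$-plane, so any two of the $a_i$'s are parallel (and never meet), and likewise for the $b_i$'s. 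Hence the sorted order of $\{a_i(r)\}$ and the sorted order of $\{b_i(r)\}$ are each invariant for every $r\ge y_\mathrm{max}$.

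From this I would extract the linear-time decision procedure. I would sort both endpoint sequences once at $r=y_\mathrm{max}$ in $O(n\log n)$ time. For any query radius $r$, the greedy leftmost-right-endpoint piercing algorithm of the previous subsection needs only these two fixed orders, so a single linear scan — selecting a piercing point and discarding the intervals containing it — computes the piercing number $k(r)$, and thus decides whether $k(r)\le k$, in $O(n)$ time with no re-sorting.

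Next I would show that $r^\star$ lies in a structured family of $O(n^2)$ candidate radii that can be searched without being enumerated. Since no $a_i$ crosses another $a_j$ and no $b_i$ crosses another $b_j$, the combinatorial order of the endpoints — and therefore $k(r)$ — can change only at a radius where $a_j(r)=b_i(r)$. Defining $r(\pi(i),\sigma(j))$ as in the Remark (clamped up to $y_\mathrm{max}$), solving $a_{\pi(i)}(r)=b_{\sigma(j)}(r)$ yields $r=(c_{\pi(i)}-d_{\sigma(j)})/2$ for the appropriate constants $c_i,d_i$; monotonicity of this expression along each row and each column is then immediate, so the candidate radii form a doubly-sorted $n\times n$ matrix whose entries are computable in $O(1)$ on demand.

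Finally I would binary search over this implicit matrix. Using the doubly-sorted matrix selection routine of~\cite{fredjohn} I can obtain the entry of any prescribed rank in $O(n)$ time while inspecting only $O(n)$ entries; interleaving such a selection with the $O(n)$-time decision test lets $O(\log n)$ rounds isolate $r^\star$, giving $O(n\log n)$ total, after which one more greedy pass recovers the $k$ centers. I expect the main obstacle to be the bookkeeping that welds these pieces together: proving that the only order changes are $a$-versus-$b$ crossings, so the matrix genuinely contains $r^\star$; checking that clamping to $y_\mathrm{max}$ preserves the doubly-sorted property; and confirming that the selection routine's $O(n)$ entry accesses compose with the $O(n)$ decision oracle without introducing a hidden logarithmic factor.
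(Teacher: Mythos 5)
Your proposal follows essentially the same route as the paper's own argument in the Remark of Section~\ref{subsec:fixed}: linear endpoint functions of slopes $\mp 1$ give fixed sorted orders and hence an $O(n)$ decision oracle, the crossing radii $a_{\pi(i)}(r)=b_{\sigma(j)}(r)$ (clamped to $y_\mathrm{max}$) form a doubly sorted matrix, and the selection routine of~\cite{fredjohn} combined with the linear-time oracle yields the $O(n\log n)$ bound. The explicit formulas and the closed form $r=(c_{\pi(i)}-d_{\sigma(j)})/2$ are correct, and the bookkeeping issues you flag are exactly the ones the paper also relies on implicitly; there is no substantive difference.
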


\section{$k$-centers on a line with fixed orientation\label{sec:orienfix}}

We now consider the case where only the orientation of the line
$\ell$ is fixed.  Without loss of generality, we may assume that
$\ell$ is horizontal. Let $r^\star$ be the minimum radius such
that there exist $k$ disks of that radius with centers on {\em a
horizontal line} and with union covering $S$. For $y_0 \in \R$,
denote the horizontal line $y=y_0$  by $\ell(y_0)$.

Like in the previous case, we first develop an algorithm for the
following decision problem, and then find the minimum radius
$r^\star$ by using this algorithm and the techniques
in~\cite{cole,matousek}:

\emph{Given $r>0$, decide if there exist $k$ disks of radius $r$
  with centers on a horizontal line $\ell(y_r)$ and with union
  covering $S$.}

\subsection{The decision algorithm}\label{subsec:orienfix}

Consider the disks $D(s_i):=D(s_i,r)$ of radius $r$ around the
points $s_i \in S$.  As the line $\ell(y)$ moves from $y = -\infty$ to $y = +\infty$, we maintain the minimum piercing number
$k(y)$ for the intervals $X_i(y) := D(s_i)\cap \ell(y)$. Since all
the disks must intersect $\ell(y)$, it is actually sufficient to
sweep $\ell(y)$ from the topmost bottom point of the disks to the
bottommost top point of the disks. Denote the $y$-coordinates of
these two points by $y_s$ and $y_t$, respectively. Then for any
$y$ from $y_s $ to $y_t$, the interval $X_i(y)$ of any point $s_i$
is non-empty. Set $\X(y):=\{ X_i(y) \mid s_i\in S\}$.

As the line $\ell(y)$ moves from  $y=y_s$ to $y=y_t$, the minimum
piercing number $k(y)$ changes only when the relative order of the
endpoints of two intervals changes. Thus the events correspond to
intersections of circles bounding the $D(s_i)$'s, and the total
number of the events is $O(n^2)$. If we have a dynamic data
structure to maintain the minimum piercing number of intervals for
each event in $O(T)$ (amortized) time, we can handle all the
events in $O(n^2T)$ time.

%The events at which $k(y)$ for $y_s \leq y \leq y_t$ changes correspond to
%the intersections of the circles bounding the $D(s_i)$'s, so the total
%number of these events is $O(n^2)$. At each such event the relative order
%of the endpoints of two intervals changes. Such changes may affect the
%minimum piercing number $k(y)$. If we have a dynamic data structure to
%maintain the minimum piercing number of intervals for each event in $O(T)$
%(amortized) time, we can handle all the events in $O(n^2T)$ time.

Chan and Mahmood~\cite{chan} describe a data structure to maintain
the minimum piercing number of a set of $n$ intervals under
insertions with $O(\log n)$ amortized time per insertion. Using
this data structure, we can maintain $k(y)$ during the sweep with
$O(\log n)$ amortized time per event, so we can handle all the
events in $O(n^2\log n)$ time.

\paragraph{Maintaining the piercing number.}
For completeness, we briefly explain how the data structure
proposed by Chan and Mahmood~\cite{chan} can be applied to our
problem. For a fixed $y$, we consider the intervals $X_i = [a_i,
b_i]$ for $s_i\in S$ (omitting `$y$' in the notation), together
with two dummy intervals $X_0 = [-\infty, -\infty]$ and $X_{n+1} =
[\infty, \infty]$. The greedy algorithm to compute the minimum
piercing number (as described in the previous section) chooses the
right endpoints of the intervals as piercing points. If $b_i$ and
$b_j$ for $i < j$ are two consecutive piercing points chosen by
the greedy algorithm, then $b_j = \min_l\{b_l\mid a_l > b_i\}$.
Now we define $\nxt(X_i) := X_j$ if $b_j = \min_l \{ b_l \mid a_l
> b_i\}$, and partition the intervals into groups each of which is
a maximal set of intervals with a common ``next'' value.
Chan and Mahmood observe that the intervals in a group appear in
consecutive order when all $(n+2)$ intervals are sorted according
to their right endpoints. Thus we can define a weighted tree $T$
of intervals with the ``next'' relation such that a pair of
consecutive vertices in a group is connected by an edge of weight
$0$, and if $X_i$ is the last interval in its group, then the
vertex for $X_i$ is connected to the vertex for $\nxt(X_i)$ by an
edge of weight $1$. As a consequence, the minimum piercing number
corresponds to the total weight of the path from $X_0$ to
$X_{n+1}$ in $T$.

The tree $T$ is implemented with a data structure for dynamic
trees~\cite{dtree} that supports link, cut and path-length queries
in $O(\log n)$ amortized time. The intervals in each group $G$ are
maintained in a balanced search tree $B_G$ ordered by the right
endpoints so that the operations such as search, split, and
concatenation can be done in logarithmic time. All the intervals
are also stored in a priority search tree $Q$ ordered by the left
endpoints with priorities defined by the right endpoints. This
allows us to find the next interval for a given interval in
$O(\log n)$ time. These priority search trees support insertion
and deletion of intervals in $O(\log n)$ time each. For further
details, refer to~\cite{chan}.

As the sweep line $\ell(y)$ moves from $y_s$ to $y_t$, we maintain the
intervals on $\ell(y)$ in the dynamic tree $T$.
First, at $y = y_s$, we construct the data structures of all $n$ intervals
by simply inserting them in $O(n \log n)$ time~\cite{chan}. The line $\ell(y)$ stops at each event $y_s < y \leq y_t$, which is the $y$-coordinate of the
intersection of the circles bounding two disks $D(s_i)$ and $D(s_j)$ for
some $i$ and $j$; we here assume that $s_i$ is in the left of $s_j$.
When these two circles intersect, the relative order of the endpoints of $X_i(y)=[a_i(y),b_i(y)]$ and $X_j(y)=[a_j(y),b_j(y)]$
changes just after $y$. Then we have four different cases that the order change. Let $y^+ := y + \eps$ and $y^- := y - \eps$ for a small $\eps > 0$ (see Figure~\ref{fig:sweepevents}).
%We denote the next interval for an interval $X$ at $y^-$ and $y^+$ by $\nxt^-(X)$ and $\nxt^+(X)$, respectively.

\epsfigure[12cm]{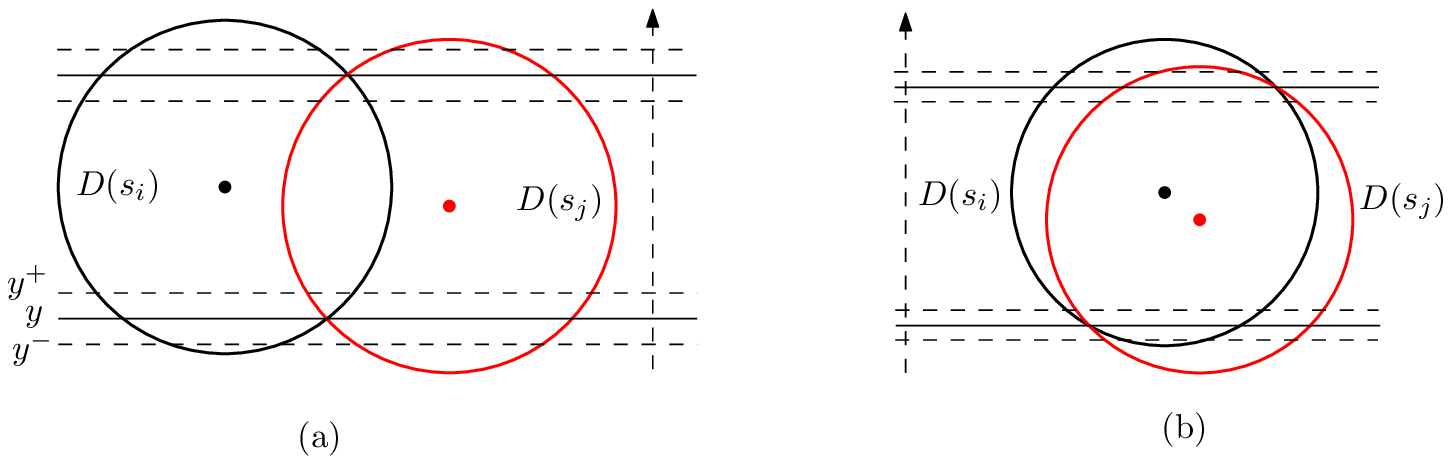}{The four possible events.
(a) Case (i) at the lower intersection point and Case (ii) at the upper intersection point.
(b) Case (iii) at the left intersection point and Case (iv) at the right intersection point.}{fig:sweepevents}

\paragraph{Case (i) $b_i(y^-) < a_j(y^-)$ and $b_i(y^+) > a_j(y^+)$.}
Two intervals $X_i$ and $X_j$ start to overlap after $y$. If the
next interval of $X_i$ at $y^-$ is not $X_j$, then we have nothing
to do. Otherwise, $X_j$ is no longer the next interval of $X_i$ at
$y^+$. By the definition of the group, $X_i$ must be the last
interval in the group $G$ that $X_i$ belongs to. First we find the
new next interval $X_l$ of $X_i$ at $y^+$ using $Q$, where $l \neq
j$. Next we split the group $G$ into two groups $G_1 := G\setminus
\{X_i\}$ and $G_2 := \{X_i\}$. Then, at $y^+$, the next interval
for $G_1$ is still $X_j$, and the next one for $G_2$ is now $X_l$.
So we merge $G_2$ into the group $G'$ that $X_l$ belongs to at
$y^-$. For these changes, we update $T$, $B_G$, and $B_{G'}$.

\paragraph{Case (ii) $b_i(y^-) > a_j(y^-)$ and $b_i(y^+) < a_j(y^+)$.}
Two intervals $X_i$ and $X_j$ start to be disjoint after $y$. The
interval $X_j$ is not the next interval for $X_i$ at $y^-$ because
they overlap, but $X_j$ can be the next interval for $X_i$ at
$y^+$. If it is indeed, then we know from the definition of the
group that $X_i$ must be the first interval of the group $G$ of
$X_i$ at $y^-$. We split $G$ into $G_1 :=\{ X_i \}$ and $G_2:=
G\setminus \{ X_i \}$. The next interval of $G_1$ is now $X_j$,
thus $G_1$ is merged with the group $G'$ whose next interval is
$X_j$ at $y^-$. As did in Case (i), we update $T$, $B_G$, and
$B_{G'}$ to reflect the changes.

\paragraph{Case (iii) $a_i(y^-) > a_j(y^-)$ and $a_i(y^+) < a_j(y^+)$.}
In this case the relative order of the left endpoints is reversed
after $y$, i.e., $a_i(y^+) < a_j(y^+)$. For this change, we have nothing to do.
\paragraph{Case (iv) $b_i(y^-) < b_j(y^-)$ and $b_i(y^+) > b_j(y^+)$.}
The relative order of the right endpoints is reversed after $y$, i.e., $b_i(y^+) > b_j(y^+)$. For the group $G$ whose next interval is $X_i$ at $y^-$, the next interval must be changed from $X_i$ to $X_j$. So we merge $G$ into the group $G'$ of $X_j$. For this change, we update $T$, $B_G$, $B_{G'}$ and $Q$ properly.\\

As a result, we can update all the data structures in $O(\log n)$
amortized time per event, so we can solve the decision problem for
a fixed radius in $O(n^2\log n)$ time.

\subsection{Computing the optimal line and radius}

We now describe how to find the optimal line $\ell(y^\star)$ and
radius $r^\star$ by a binary search that uses the decision
algorithm for a fixed radius. To run a binary search, we need a
discrete candidate set for the optimal radius $r^\star$, and we
first study necessary conditions for the optimal line
$\ell(y^\star)$ and radius $r^\star$.

The optimal line $\ell(y^\star)$ must be immobilized, in the sense
that if we translate it either upward or downward then
the radius of the disks should be increased in order to cover $S$.
So in every optimal configuration, there must be at least two
points of $S$ on the circles that immobilize $\ell(y^\star)$,
as illustrated in Figure~\ref{fig:optimal}. In the first two configurations
$\ell(y^\star)$ is fixed by two or three points on one bounding
circle, and in the other configurations it is fixed by at most
four points on two bounding circles.

%%%%%%%%%
% From HS
%
%Suppose we have an optimal line $\ell(y^\star)$ and $k$ discs of radius $r^\star$ whose centers lie on $\ell(y^\star)$ and whose union covers $S$.
%Then each side of the line $\ell(y^\star)$ must have at least one point
%of $S$ touching one of the $k$ disks; since otherwise we can
%translate the optimal line and reduce the radius of the $k$ disks
%while the union of the disks still covers $S$. Thus we get two
%optimal configurations defined by a pair of points of $S$, as
%shown in Figure~\ref{fig:optimal}~(a) and (b). By the same
%reasoning, we observe that in the figures (a) and (b), a point in
%each side can be replaced by a pair of points touching the same
%disc in the same side of $\ell(y^\star)$. This gives us three more
%optimal configurations (up to some symmetric cases), among which
%one is obtained by replacing a point in (a) with two points
%(replacing both points is excluded because no four points of $S$
%can lie in a circle) and the other two are shown in
%Figure~\ref{fig:optimal}~(c) and (d).
%
%This is summarized in the following observation:

\epsfigure[\textwidth]{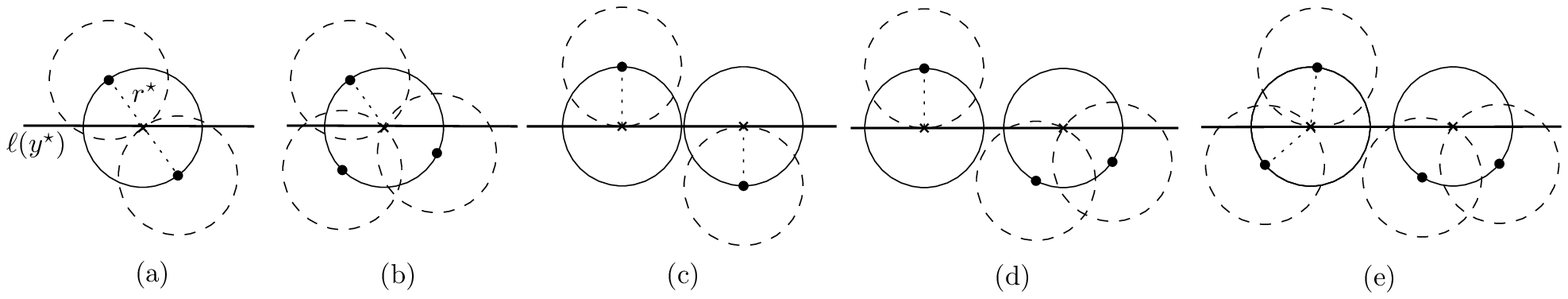}{Five configurations
of the optimal line and radius; symmetric ones are omitted.}{fig:optimal}

%%
%\begin{observation}\label{obs:1}
%  Let $S$ be a set of points in $\R^2$ and $k \in \N$. Let $r^\star$ be the
%  minimum radius such that there exist $k$ disks of that radius
%  with centers on some horizontal line and with union covering $S$.
%  Let $\ell^\star$ be the horizontal line corresponding to  $r^\star$.
%  %
%  Consider the set $\mathcal{D}^\star $ of disks of radius $r^\star$ around the points of $S$,
%  i.e., $\mathcal{D}^\star := \{ D(s,r^\star)\mid s \in S\}$.
%  Then the line $\ell^\star$
%  \begin{itemize}
%  \item is a bitangent of two disks from $\mathcal{D}^\star$ (that has the
%      disks on opposite sides), or
%  \item is tangent to one disk from $\mathcal{D}^\star$ and passes through
%      an intersection point of two circles bounding disks from
%  $\mathcal{D}^\star$, or
%  \item passes through two intersection points of two different pairs of
%      circles bounding disks from $\mathcal{D}^\star$.
%  \end{itemize}
%\end{observation}
%%
%

From these optimal configurations, we obtain $O(n^2)$ candidates
for the optimal radius as follows: consider the discs $D_{ij}$
with $s_i$ and $s_j$ on its bounding circle. Their centers move
along the bisector of $s_i$ and $s_j$, and thus in the
$(y,r)$-plane we can define the function $r = r_{ij}(y)$ of the
radius of $D_{ij}$ with center of $y$-coordinate $y$. It is easy
to see that $r_{ij}$ is a unimodal function, i.e., $r_{ij}$ is
decreasing for $y \le (y_i+y_j)/2$ and increasing for $y \ge
(y_i+y_j)/2$. If we define $r_{ii}(y)$ as the distance of $s_i$ to
the line $\ell(y)$, we get another unimodal function $r_{ii}(y)$
(which is decreasing for $y \le y_i$ and increasing for $y \ge
y_i$).

We now split $r_{ij}$ into two monotone pieces: a decreasing piece
$r^-_{ij}$ and an increasing piece $r^+_{ij}$, and set
$\mathcal{R}^- := \{r^-_{ij}\mid 1 \le i \le j \le n\}$,
$\mathcal{R}^+ := \{r^+_{ij}\mid 1 \le i \le j \le n\}$, and
$\mathcal{R} := \mathcal{R}^-\cup\mathcal{R}^+$. Then every
optimal configuration illustrated in Figure~\ref{fig:optimal}
corresponds to an intersection point of two pieces, one from
$\mathcal{R}^-$ and the other from $\mathcal{R}^+$; the first case
(a) corresponds to the point ${r^-_{ij}} \cap {r^+_{ij}}$ for a
pair of $s_i$ and $s_j$, the second case (b) corresponds to the
point ${r^-_{ij}} \cap {r^+_{jh}}$ for a triple of $s_i$, $s_j$,
and $s_h$, and the remaining cases (c), (d), and (e) correspond to
the points ${r^-_{ii}} \cap {r^+_{jj}}$, ${r^-_{ii}} \cap
{r^+_{jh}}$, and  ${r^-_{ij}} \cap {r^+_{hl}}$ for points $s_i$,
$s_j$, $s_h$, and $s_l$, respectively. Therefore, the optimal
radius $r^\star$ is the $r$-coordinate of one of the intersection
points between $\mathcal{R}^-$ and $\mathcal{R}^+$. Both
$\mathcal{R}^-$ and $\mathcal{R}^+$ consist of $O(n^2)$ curves, so
the number of intersection points is $O(n^4)$.

Consequently, we can determine $r^\star$ by performing a binary
search (discriminated by the decision algorithm) on the radii
associated with the vertices of the arrangement of the curves in
$\mathcal{R}$. But the complexity of the arrangement is $O(n^4)$,
so we find a way to compute the median of these radii without
computing the arrangement explicitly. Since any two functions in
$\mathcal{R}$ intersect at most once, we can pick the median of
the radii from the arrangement in $O(n^2\log n)$ time, by using
the modified slope selection algorithm~\cite{cole,matousek} as we
did in Section~\ref{subsec:fixed}. With this median of the radii,
we run the decision algorithm given in
Section~\ref{subsec:orienfix} in $O(n^2\log n)$ time. Thus we can
perform a step in the binary search in $O(n^2\log n)$ time, and
$r^\star$ can be found in $O(n^2\log^2 n)$ time.

We just proved the following:
\begin{theorem}\label{th:orienfix}
  Let $S$ be a set of $n$ points in $\R^2$ and $k \in \N$.  Let $r^\star$
  be the minimum radius such that there exist $k$ disks of that radius
  with centers on a horizontal line and with union covering $S$.  Then
  $r^\star$ and such disks can be computed in $O(n^2\log^2 n)$ deterministic
  time.
\end{theorem}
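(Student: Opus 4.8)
The plan is to mirror the strategy of Section~\ref{subsec:fixed}: reduce the optimization to the decision problem of Section~\ref{subsec:orienfix} by a binary search, for which I need two ingredients --- a discrete set of candidate radii guaranteed to contain $r^\star$, and a way to isolate one candidate (the median) cheaply at each step without ever writing down the whole candidate set. Since the decision algorithm already runs in $O(n^2\log n)$ time, the whole difficulty is in organizing the candidates so that each binary-search step also costs only $O(n^2\log n)$.

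First I would pin down the structure of an optimal configuration. Because the line is free to translate vertically, at the optimum it must be \emph{immobilized}: moving $\ell(y^\star)$ up or down forces the radius to grow, so some points of $S$ must sit on the bounding circles and block both directions of motion. For a pair $(s_i,s_j)$ let $r_{ij}(y)$ be the radius of the smallest disk centered on $\ell(y)$ whose bounding circle passes through both points (with $r_{ii}(y)$ the distance from $s_i$ to $\ell(y)$). Each $r_{ij}$ is unimodal, decreasing then increasing, so it splits into a decreasing branch $r^-_{ij}$ and an increasing branch $r^+_{ij}$; I collect these into $\mathcal{R}^-$ and $\mathcal{R}^+$, each of size $O(n^2)$. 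The immobilization condition then forces $r^\star$ to be the $r$-coordinate of a crossing between a branch of $\mathcal{R}^-$ and a branch of $\mathcal{R}^+$ --- the five cases of Figure~\ref{fig:optimal} --- so the candidate set is the $O(n^4)$ such crossings.

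The hard part is turning this $O(n^4)$-size candidate set into an $O(n^2\log^2 n)$ algorithm. Enumerating and sorting all crossings would already cost $\Omega(n^4)$, so I would avoid building the arrangement of $\mathcal{R}$ altogether. The key observation is that any two branches in $\mathcal{R}$ cross at most once, i.e.\ they behave as pseudo-lines; this is precisely the setting in which the slope-selection machinery of~\cite{cole,matousek} selects the crossing of a prescribed rank. Adapting that algorithm exactly as in Section~\ref{subsec:fixed}, I can extract the median candidate radius over all $O(n^4)$ crossings in $O(n^2\log n)$ time while touching only $O(n^2)$ of the branches explicitly.

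Finally I would assemble the binary search: each step selects the median radius in $O(n^2\log n)$ time and then runs the decision algorithm of Section~\ref{subsec:orienfix} in $O(n^2\log n)$ time to discard half of the surviving crossings; after $O(\log n^4)=O(\log n)$ steps the search converges on $r^\star$, for a total of $O(n^2\log^2 n)$, and one further call at $r=r^\star$ recovers the optimal line and the $k$ disks. The step I expect to need the most care is justifying the structural reduction: proving that $r^\star$ really is a decreasing-branch/increasing-branch crossing (via the immobilization argument and the unimodality of each $r_{ij}$), and confirming the pseudo-line property that any two branches meet at most once --- since the correctness of the median-selection shortcut, and hence the entire running-time bound, rests on exactly that property.
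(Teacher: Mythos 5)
Your proposal follows the paper's own proof essentially step for step: the same immobilization argument yielding the five optimal configurations, the same unimodal functions $r_{ij}(y)$ split into monotone branches forming $\mathcal{R}^-$ and $\mathcal{R}^+$, the same $O(n^4)$ candidate set of branch crossings, and the same use of the slope-selection machinery of~\cite{cole,matousek} (exploiting that any two branches cross at most once) to select the median radius in $O(n^2\log n)$ time per binary-search step. The argument is correct and matches the paper's approach.
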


\paragraph{Remarks.} This algorithm can be immediately applied to the
problem in the $L_1$-metric. For the $L_\infty$-metric, it is trivially
solved in $O(n\log n)$ time because the optimal horizontal line is the
middle line between the lowest and highest points.

%%%%
%%%%

\section{$k$-centers on a line with any orientation}\label{sec:arbit}

We now consider the case where the line $\ell$ can have arbitrary
orientation and position. For $k = 2$, the problem is equivalent
to the standard two-center problem~\cite{Chan99,Epp97}, so we
assume that $k > 2$. Let $r^\star$ be the minimum radius such that
there exist $k$ disks of that radius with centers on a line and
with union covering $S$. We will find $r^\star$ in a similar way
as before: we first design a deterministic algorithm for the
following decision problem and then perform a randomized binary
search over some candidate set of radii:

\emph{Given $r>0$, decide if there exist $k$ disks of radius $r$
  with centers on a line and with union covering $S$.}

The decision algorithm for fixed $r$ runs in $O(n^4\log n)$ time,
and the randomized algorithm for finding the optimal radius
$r^\star$ by a binary search takes $O(n^4 \log^2 n)$ expected
time.

\subsection{Decision algorithm}

Let $\ell(\delta, h)$ denote the line with slope $\delta$ and
$y$-intercept $h$, and let $r>0$ be fixed. We need to decide if
there exists a line $\ell(\delta, h)$ such that $\ell(\delta, h)$
intersects $D(s_i,r)$ for all $1 \le i \le n$ and the minimum
piercing number $k(\delta, h)$ for the interval system
$\{D(s_i,r)\cap \ell(\delta, h)\mid 1 \le i \le n\}$ is no more
than $k$. Note that the set $\{ D(s_i,r) : s_i \in S\}$ is fixed
and thus that if such a line $\ell(\delta, h)$ exists, then we can
move, i.e., translate and rotate, the line without making any combinatorial change, until the line reaches one of the following configurations: (i) it contacts two disks tangentially, or (ii) it passes through an intersection
of circles bounding two disks and is tangent to a third disk, or
(iii) it passes through an intersection of two disks and another
intersection of two different disks; refer to the dash-lined disks
of Figure~\ref{fig:optimal}~(c),(d) and (e), respectively. Thus we
need to check only the lines in such configurations for this
decision problem.

For the first case (i), there are only $O(n^2)$ possible configurations--the bitangents of the disks $D(s_i,r)$--so we can check them all with the decision algorithm for the fixed center line. This takes $O(n^3\log n)$ total time.  For the second case (ii), we have $O(n^3)$ possible configurations, and again we
can check them all; this takes $O(n^4\log n)$ total time. We now
describe how to check the $O(n^4)$ possible configurations
corresponding to the third case  (iii).
We fix an intersection point $o$ of circles bounding two disks,
and check all center lines passing through $o$. To this end we
first sort all the intersections of the other circles in angular
order around $o$; let $p_1, \ldots, p_m$ denote this sorted
sequence (note that $m = O(n^2)$). We denote the line passing
through $o$ and $p_i$ by $\ell(p_i)$. In a next step we determine
a maximal interval $1 \le a \le b \le m$ such that for any $a \leq
i \leq b$ the line $\ell(p_i)$ intersects all $n$ disks. This can
be done in $O(n^2 \log n)$ time by a simple angular sweep. We now
maintain the minimum piercing number for the intervals on the line
$\ell(p_i)$ as we sweep it from $\ell(p_a)$ to $\ell(p_b)$. We can
maintain the corresponding data structures in $O(\log n)$
amortized time per event as described in
Section~\ref{sec:orienfix}, so we can check in $O(n^2\log n)$ time
all center lines passing through a fixed intersection $o$. We run
this algorithm for all intersection points of circles bounding two
disks, thus we can check the $O(n^4)$ possible configurations
corresponding to the third case in $O(n^4\log n)$ time.

\subsection{Finding the optimal line and radius}

To find an optimal triple $(\delta^\star, h^\star, r^\star)$, we
use another form of binary searching which was applied to the
slope selection problem by Shafer and Steiger~\cite{shafer}. As
before, we investigate all optimal configurations to get a
discrete candidate set of optimal radius $r^\star$. Recall the
optimal configurations for the center lines of fixed orientation,
shown in Figure~\ref{fig:optimal}.
In each configuration, we can slightly rotate the center line in clockwise direction without increasing the radius while the union keeps covering $S$. Thus we need more points on another disk to immobilize the line, and we obtain optimal configurations defined by two or three disks of radius $r^\star$; in every optimal configuration with two disks, at least one of the two disks is such as (a) or (b) of Figure~\ref{fig:optimal} that contains a diametral pair of points or a triple of points on its boundary. Every other configuration consists of three disks, each having one or two points on the boundary.

From these optimal configurations, we collect candidates for the optimal radius $r^\star$. We first consider the  configurations defined by two disks. As noted above, every optimal configuration with two disks includes a diametral pair or a triple of points of $S$ that lie on the boundary of an optimal disk. This pair or triple determines the radius of the optimal disk that it lies on. So we simply compute the radii from all pairs and triples of points and get $O(n^3)$ candidates for the optimal radius. Every other configuration consists of three disks, each having one or two points on the boundary. We can interpret the radii in such configurations as intersections of triples of surfaces as follows: Let $f_{ij}(\delta, h)$ be the radius of the disk whose center lies on the line $\ell(\delta, h)$ of slope $\delta$ and $y$-intercept $h$ and whose boundary contains two points $s_i$ and $s_j$ of $S$; if $i$ is equal to $j$ then $f_{ii}(\delta, h)$ is the distance from $s_i$ to $\ell(\delta, h)$. The graph of $f_{ij}$ is a well-behaved low-degree algebraic surface in 3-dimensional $(\delta, h, r)$-space, and each triple of these surfaces provides only constant number of radius values. Let $F$ be the set of these $O(n^2)$ surfaces, and $\mathcal{A}(F)$ be their arrangement; the complexity of $\mathcal{A}(F)$ is $O(n^6)$ (see page 533 in~\cite{handbook}).

Now we may compute all $O(n^6)$ radii from the vertices of the arrangement $\mathcal{A}(F)$ and perform a binary search on the union of two sets of the previously computed $O(n^3)$ radii and of the just computed $O(n^6)$ radii, which will take $O(n^6\log n)$ time. However, we can do better if we adopt the randomization technique, as used in the slope selection problem by Shafer and
Steiger~\cite{shafer}. Instead of computing all $O(n^6)$ vertices and radii from the arrangement, we select uniformly at random $n^4$ triples of surfaces in $F$ and compute the radii from the triples; each triple of the surfaces gives us constant number of vertices from $\mathcal{A}(F)$, so we get $O(n^4)$ radii in total. We sort these $O(n^4)$ radii together with the previously computed $O(n^3)$ radii, in $O(n^4\log n)$ time. Using the decision algorithm for a fixed radius (of the previous subsection), we now perform a binary search and determine two consecutive radii $r_i$ and $r_{i+1}$ such that $r^\star$ is between $r_i$ and $r_{i+1}$. This takes $O(n^4\log^2 n)$ time in total.

Since the vertices were picked randomly, the strip $W[r_i, r_{i+1}]$ bounded by the two planes $z:= r_i$ and $z:= r_{i+1}$ contains only $O(n^3)$ vertices of $\mathcal{A}(F)$ with high probability; in fact this is always guaranteed if we select $\Omega(n^3\log n)$ triples of the surfaces. So we can compute all the vertices in $W[r_i,r_{i+1}]$ by a sweep-plane algorithm~\cite{sh02} in $O(n^4\log n)$ time as follows: we first compute the intersection of the sweeping plane at $z:= r_i$ with the surfaces in $F$. This intersection forms a two-dimensional arrangement of $O(n^2)$ quadratic closed curves and straight lines with $O(n^4)$ total complexity, so we can compute it in $O(n^4\log n)$ time. We next construct the arrangement in $W[r_i,r_{i+1}]$ incrementally by sweeping from the intersection at $z:=r_i$ towards $z:= r_{i+1}$. As a result, we can compute the $O(n^3)$ vertices (and the corresponding $O(n^3)$ radii) in $W[r_i,r_{i+1}]$. The computation time depends on the complexities of the curve arrangements on two planes $z:= r_i$ and $z:=r_{i+1}$ plus the complexity of the surface arrangement in the strip $W[r_i, r_{i+1}]$, which is $O(n^4)$. Thus the time to identify all $O(n^3)$ vertices lying in the strip is $O(n^4\log n)$.

As a final step, we perform again a binary search over these
$O(n^3)$ radii in $W[r_i,r_{i+1}]$ to find $r^\star$, which takes
$O(n^4\log^2 n)$ time. We can find the optimal radius $r^\star$ in $O(n^4\log^2 n)$ time with high probability, so this randomized algorithm to find the optimal radius takes $O(n^4\log^2 n)$ expected time. This result is summarized in the following:
\begin{theorem}\label{th:arbi}
  Let $S$ be a set of $n$ points in $\R^2$ and  $k \in \N$.  Let $r^\star$ be the minimum radius such that   there exists a set of $k$ disks of that radius   with centers on a line and with union covering $S$.  Then $r^\star$ and such disks can be computed in $O(n^4\log^2 n)$ expected time.
\end{theorem}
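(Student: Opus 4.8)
The plan is to reduce the optimization to a decision problem and then locate $r^\star$ by binary search over a discrete set of candidate radii, following the two-phase strategy of Section~\ref{sec:orienfix} but now allowing the center line to rotate. The decision problem asks, for a fixed $r>0$, whether some line $\ell(\delta,h)$ meets every disk $D(s_i,r)$ and induces an interval system of minimum piercing number at most $k$. Since the disk set $\{D(s_i,r)\}$ is fixed, I would first argue that any feasible line can be translated and rotated without altering the combinatorial structure of its intervals until it is immobilized in one of three critical types: (i) tangent to two disks; (ii) through one circle--circle intersection and tangent to a third disk; or (iii) through two distinct circle--circle intersections. These classes contain $O(n^2)$, $O(n^3)$, and $O(n^4)$ lines respectively, and testing them suffices.

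For types (i) and (ii) I would enumerate the $O(n^2)$ bitangents and $O(n^3)$ incidence lines and run the fixed-line decision routine of Theorem~\ref{th:0dof} on each, at cost $O(n^3\log n)$ and $O(n^4\log n)$. The bottleneck is type (iii), where enumerating $O(n^4)$ lines and testing each separately would cost $O(n^5\log n)$. To avoid this I would fix each of the $O(n^2)$ circle--circle intersection points $o$ in turn, sort the $m=O(n^2)$ remaining intersection points angularly about $o$, and sweep the line through $o$ across this angular order while maintaining the minimum piercing number incrementally with the Chan--Mahmood dynamic structure of Section~\ref{subsec:orienfix}, which absorbs each of the $O(n^2)$ order-change events in $O(\log n)$ amortized time. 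A preliminary angular scan in $O(n^2\log n)$ restricts the sweep to the range in which $\ell$ meets all $n$ disks. This gives $O(n^2\log n)$ per $o$ and $O(n^4\log n)$ for the whole decision algorithm.

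For the optimization phase I would characterize the optimal configurations by a rotation argument: starting from any fixed-orientation optimum of Figure~\ref{fig:optimal} and rotating the center line slightly forces an extra contact, so $r^\star$ is realized either by two disks (one of them carrying a diametral pair or a boundary triple of points) or by three disks. The two-disk case yields $O(n^3)$ radii computable directly from all pairs and triples of points. The three-disk case I would model in $(\delta,h,r)$-space by the $O(n^2)$ radius surfaces $f_{ij}$, whose arrangement $\mathcal{A}(F)$ has complexity $O(n^6)$, so that $r^\star$ is the $r$-coordinate of one of its vertices. The hard part is avoiding the $O(n^6)$ cost of listing every vertex. I would therefore import the Shafer--Steiger randomized selection idea: sample $\Theta(n^4)$ triples of surfaces to obtain $O(n^4)$ random vertex radii, merge them with the $O(n^3)$ two-disk radii, and binary-search with the $O(n^4\log n)$ decision algorithm to trap $r^\star$ inside a strip $W[r_i,r_{i+1}]$ bounded by two horizontal planes, in $O(n^4\log^2 n)$ time.

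The crux of the randomized analysis is to show that, with high probability, this strip contains only $O(n^3)$ vertices of $\mathcal{A}(F)$; a standard sampling bound makes this certain once $\Omega(n^3\log n)$ triples are drawn. Given so few vertices, I would list them by a plane-sweep through the strip: build the $O(n^4)$-complexity planar arrangement of the $O(n^2)$ conics and lines cut out at $z=r_i$ in $O(n^4\log n)$ time, then sweep upward to $z=r_{i+1}$ to collect the $O(n^3)$ vertices and their radii, and conclude with a binary search over these $O(n^3)$ radii to pin down $r^\star$, again in $O(n^4\log^2 n)$ time. Summing the phases yields the claimed $O(n^4\log^2 n)$ expected bound. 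I expect the two main obstacles to be verifying the exhaustiveness of the configuration classification in the decision step, so that no optimal line type is missed, and establishing the probabilistic guarantee that the isolating strip is light.
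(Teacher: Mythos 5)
Your proposal follows the paper's own proof essentially step for step: the same three-way classification of critical lines in the decision algorithm with the same per-class costs and the same angular sweep around each circle--circle intersection point, and the same optimization phase combining the $O(n^3)$ two-disk radii with Shafer--Steiger random sampling of surface triples, strip isolation, plane-sweep enumeration of the $O(n^3)$ surviving vertices, and a final binary search. The two ``obstacles'' you flag (exhaustiveness of the line classification and the lightness of the isolating strip) are exactly the points the paper also relies on, so no further comparison is needed.
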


\section{Approximation algorithms}

We propose two approximation algorithms for the problem of computing
$k$-line centers for lines with fixed and arbitrary orientations,
respectively.

\subsection{Fixed orientation}\label{sec:fixed_apx}

We consider the case where the orientation of the line $\ell$ is given in
advance.  Without loss of generality, we may assume that $\ell$ is
horizontal. Fix an approximation parameter $\eps > 0$. Let $h$ be the
difference of the $y$-coordinates of the lowest point $L$ and the highest
point $H$ in $S$. Clearly, the optimal radius $r^\star$ is at least
$h/2$. For $\delta := \eps h$ we sample $h/\delta = 1/\eps$ lines of equal
distance between $L$ and $H$, solve the problem for the fixed lines in
$O(n\log^2 n)$ time per line, and take the smallest radius $r'$ among the
solutions. Since the optimal line lies between two consecutive sampled
lines, the radius $r'$ is at most $r^\star + \delta$, which is $r^\star+
\eps h \leq (1+2\eps)r^\star$. This result is summarized in the following:

\begin{theorem}\label{th:fptas1dof}
  Let $S$ be a set of $n$ points in $\R^2$. Let $k \in \N$ and $\eps > 0$.
  Let $r^\star$ denote the minimum radius such that there exists a set of $k$
  disks with radius $r^\star$ centered on some horizontal line that cover
  $S$. We can compute in time $O({1\over\eps}n\log^2 n)$ a set of $k$ disks
  with radius at most $(1+\eps)r^\star$ centered on some horizontal line that
  cover $S$.
\end{theorem}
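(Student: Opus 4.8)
The plan is to reduce the continuous problem of choosing the best horizontal center line to a finite, $\eps$-dense set of candidate lines, and to solve each candidate exactly via Theorem~\ref{th:0dof}. The starting point is a lower bound on $r^\star$: since the highest point $H$ and the lowest point $L$ differ in $y$-coordinate by $h$, every horizontal center line has $y$-distance at least $h/2$ to one of $L,H$, so any disk centered on it that covers $S$ has radius at least $h/2$; hence $r^\star \ge h/2$. This is precisely the ingredient that converts an additive sampling error into a multiplicative one.

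Next I would establish the key geometric fact that translating the center line is cheap. Suppose $S$ is covered by $k$ disks of radius $r$ with centers on the line $\ell(y_0)$, and let $\ell(y_1)$ be a horizontal line with $|y_1-y_0|=t$. Replacing each center $(x_c,y_0)$ by $(x_c,y_1)$ and enlarging every radius to $r+t$ still yields a covering: for any point $p$ covered by the original disk, the triangle inequality gives $\|p-(x_c,y_1)\| \le \|p-(x_c,y_0)\| + t \le r+t$. Writing $\rho(y)$ for the optimal radius over center lines at height $y$, this shows $\rho(y_1)\le\rho(y_0)+t$, so $\rho$ is $1$-Lipschitz in $y$. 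In particular any sampled line within vertical distance $\delta$ of the (unknown) optimal line admits a covering of radius at most $r^\star+\delta$.

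With these two facts the algorithm is immediate. Setting $\delta := \tfrac{\eps}{2}h$, I would sample the $1+h/\delta = 1+2/\eps$ horizontal lines at equally spaced heights between $L$ and $H$, solve each fixed-line instance by Theorem~\ref{th:0dof} in $O(n\log^2 n)$ time, and return the smallest radius $r'$ found together with its disks. The optimal center line lies between $L$ and $H$ (otherwise one could translate it toward $S$ and strictly shrink the radius), so some sampled line is within vertical distance $\delta$ of it; by the Lipschitz bound, $r' \le r^\star + \delta = r^\star + \tfrac{\eps}{2}h \le r^\star + \eps\,r^\star = (1+\eps)r^\star$, where the final inequality uses $r^\star \ge h/2$. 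The total time is $O(\tfrac{1}{\eps}\cdot n\log^2 n)$.

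There is no genuinely hard step here; the only subtlety is bookkeeping the constants so that the approximation factor is $(1+\eps)$ rather than the $(1+2\eps)$ produced by the naive spacing $\delta=\eps h$ — handled simply by sampling at spacing $\tfrac{\eps}{2}h$, which costs only a constant factor in running time. I would also remark that the argument is metric-agnostic: both the triangle inequality and Theorem~\ref{th:0dof} hold under any fixed $L_p$ metric, so the same guarantee carries over.
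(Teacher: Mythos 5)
Your proposal is correct and follows essentially the same route as the paper: lower-bound $r^\star$ by $h/2$, sample $O(1/\eps)$ equally spaced horizontal lines between $L$ and $H$, solve each fixed-line instance via Theorem~\ref{th:0dof}, and convert the additive error $\delta$ into a multiplicative $(1+\eps)$ factor. You are merely more explicit about the $1$-Lipschitz translation argument and more careful with the constant (the paper uses spacing $\eps h$ and lands at $(1+2\eps)r^\star$, implicitly rescaling $\eps$), which is a harmless refinement rather than a different approach.
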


\subsection{Arbitrary orientation}

In this section, we give approximation algorithms for the general case
where the line $\ell$ containing the $k$-centers is arbitrary. We first
give a constant-factor approximation algorithm, and we show how to
use this result to get a fully polynomial-time approximation scheme.

We denote by $\ell^\star$ an optimal line, and we denote by $r^\star$ the
optimal radius of $k$ disks centered at $\ell^\star$ and containing $S$.

\begin{lemma}\label{lem:constant_factor}
We can compute in time $O(n\log n)$ a radius $r_c$ and $k$ disks with radius
$r_c$ and with collinear centers, such that these disks cover $S$ and $r^\star
\leq r_c \leq \sqrt{2}r^\star$.
\end{lemma}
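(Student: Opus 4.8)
My plan is to avoid searching over orientations altogether and instead reduce the problem to a one-dimensional $k$-center problem on a single, cleverly chosen line. The driving observation is that the unknown optimum already forces $S$ into a thin strip: the $k$ optimal disks of radius $r^\star$ have their centers on $\ell^\star$, so their union lies within distance $r^\star$ of $\ell^\star$, and hence $S$ is contained in a strip of width at most $2r^\star$. Therefore the \emph{minimum-width strip} of $S$ has width $w\le 2r^\star$. I would compute this strip from the convex hull by rotating calipers in $O(n\log n)$ time, then rotate coordinates so that its center line is the $x$-axis; afterwards every point satisfies $|y_i|\le w/2\le r^\star$.

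Next I would project $S$ orthogonally onto this $x$-axis. The second key inequality comes from the same optimum: each optimal disk has diameter $2r^\star$, so the points it covers project to an $x$-interval of length at most $2r^\star$; since the $k$ disks cover $S$, the projected points are covered by $k$ intervals of length $2r^\star$. Consequently the one-dimensional $k$-center radius $\rho$ of the projections satisfies $\rho\le r^\star$. I would compute $\rho$ together with its $k$ interval centers $m_1,\dots,m_k$ on the $x$-axis in $O(n\log n)$ time: the feasibility predicate ``do $k$ intervals of length $L$ cover the points?'' is a linear-time greedy sweep, and the optimal $L$ is of the form $x_j-x_i$ and can be selected from the doubly sorted matrix of pairwise gaps using the technique of~\cite{fredjohn}, exactly as in Section~\ref{subsec:fixed}.

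Finally I would lift the one-dimensional solution back to the plane. Set $r_c:=\sqrt{\rho^2+(w/2)^2}$ and place a disk of radius $r_c$ centered at each $(m_j,0)$. A point $s_i$ whose projection lies within $\rho$ of $m_j$ then lies within distance $\sqrt{\rho^2+(w/2)^2}=r_c$ of $(m_j,0)$, so these $k$ disks have collinear centers and cover $S$; feasibility immediately gives $r^\star\le r_c$. For the upper bound I combine the two inequalities above: $r_c^2=\rho^2+(w/2)^2\le r^{\star 2}+r^{\star 2}$, so $r_c\le\sqrt{2}\,r^\star$.

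I expect the conceptual crux to be the two projection bounds $w\le 2r^\star$ and $\rho\le r^\star$, since both must be established against an optimum we never see; note that it is precisely their combination that produces the factor $\sqrt{2}$, one $r^\star$ being ``spent'' across the strip and one along it. The only other point requiring care is algorithmic: because $r^\star$ is unknown, the whole computation must be driven by the computable quantities $w$ and $\rho$, and I must keep the one-dimensional $k$-center subroutine within the $O(n\log n)$ budget, which the sorted-matrix selection of~\cite{fredjohn} guarantees.
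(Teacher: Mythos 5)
Your proposal is correct and follows essentially the same route as the paper: compute the minimum-width strip by rotating calipers, project $S$ onto its center line, solve the one-dimensional $k$-center problem there, and combine the two bounds $w/2\le r^\star$ and $\rho\le r^\star$ via the Pythagorean theorem to obtain the factor $\sqrt{2}$. The only cosmetic differences are that you set $r_c=\sqrt{\rho^2+(w/2)^2}$ directly where the paper argues by two cases ($r_w\le w/2$ versus $r_w>w/2$), and you implement the one-dimensional subroutine by sorted-matrix selection rather than citing Frederickson's linear-time tree-partitioning algorithm; both stay within the $O(n\log n)$ budget.
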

\begin{proof}
The width  of $S$ is the minimum distance between two lines that contain $S$;
we denote this width by $w$. We first compute a line $\ell_w$ such that the
maximum distance between $\ell_w$ and any point of $S$ is at most $w/2$; this
computation can be done in time $O(n \log n)$ by first computing the convex
hull of $S$, and then by finding the width of this convex hull in linear time
using the rotating calliper technique~\cite{toussaint}.

\epsfigure[\textwidth]{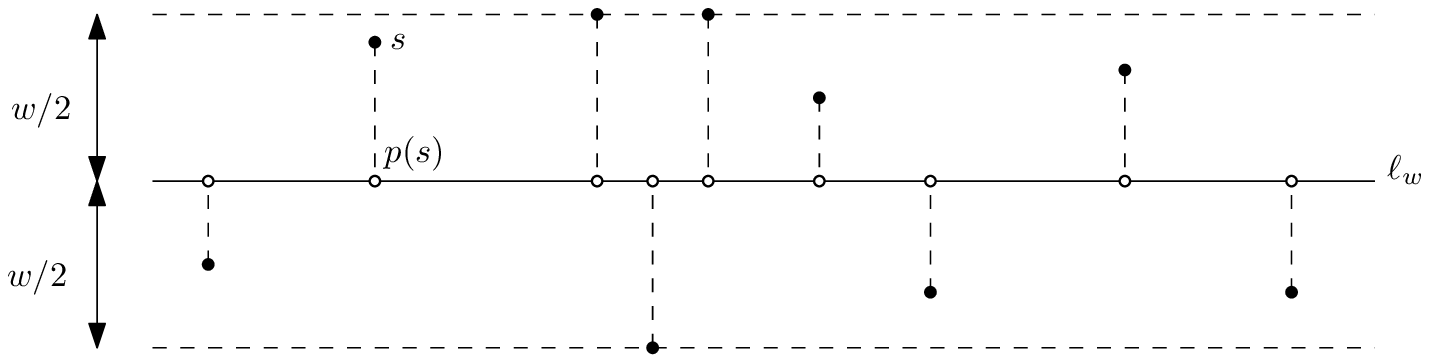}
{Proof of Lemma~\ref{lem:constant_factor}. The black dots are the points in $S$, and the white dots are the points in $S_w=p(S)$.}{fig:constant}

We denote by $p$ the orthogonal projection to $\ell_w$. We project all  the
points in $S$ and obtain a point set $S_w =\{ p(s) \mid s\in S\}$. (See
Figure~\ref{fig:constant}.) We solve the $k$-center problem for $S_w$ when the
centers are constrained to lie on $\ell_w$. It is the one-dimensional
$k$-center problem, which can be solved in $O(n)$ time after the points in
$S_w$ have been sorted~\cite{frederickson}. We denote by $r_w$ the optimal
radius for this problem, and we denote by $C_w$ a set of $k$ points such that
$S$ is contained in the union of the $k$ disks with radius $r_w$ and center in
$C_w$. We have $r_w \leq r^\star$, because when we project the $k$ disks of a
solution of the original problem to $\ell_w$, we obtain a set of segments with
length $2r^\star$ whose union contains $S_w$. We now distinguish between two
cases.

First we assume that $r_w \leq w/2$. Let $s$ be a point in $S$. There exists $c \in C_w$ such that $|cp(s)| \leq r_w$. As $|sp(s)| \leq w/2$, it implies that $|cs| \leq w/\sqrt{2}$. We have just proved that $S$ is contained in the union of the disks centered at $C_w$ with radius $w/\sqrt{2}$. As we noticed earlier
that $r^\star \geq w/2$, we conclude that the set of disks centered at $C_w$
with radius $w/\sqrt{2}$ is a $\sqrt{2}$-factor approximation of  the optimum.

Now we prove the remaining case: we assume that $r_w > w/2$. Let $s$ be a point
in $S$. There exists $c \in C_w$ such that $cp(s) \leq r_w$. Since $|sp(s)|
\leq w/2$, we get $|cs| \leq \sqrt{2}r_w$. It follows that $S$ is contained in
the union of the disks centered at $C_w$ with radius $\sqrt{2}r_w$, and we
conclude using the fact that $r_w \leq r^\star$.
\end{proof}

We now extend Lemma~\ref{lem:constant_factor} into an approximation scheme. We
first compute a radius $r_c$ such that $r^\star \leq r_c \leq \sqrt{2}r^\star$.
The diameter $d$ of $S$ is the maximum distance between any two points in $S$.
We compute a pair $a,b \in S$ such that $|ab|=d$; it can be done in $O(n \log
n)$ time in the same way as we computed the width. The lemma below handles the
case where $S$ is skinny.
\begin{lemma}\label{lem:skinny}
We assume that $0 <\eps <1$. If $d \geq 3r_c$, then we can compute in time
$O(\eps^{-2}n\log^2 n)$ a set of $k$ disks with collinear centers and with
radius less than $(1+\eps)r^\star$, such that these disks cover $S$.
\end{lemma}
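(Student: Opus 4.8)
The plan is to exploit the fact that when $S$ is skinny, the orientation of the optimal line $\ell^\star$ is pinned down to a narrow angular range around the direction of the diametral pair $a,b$, so that the fixed-orientation approximation of Theorem~\ref{th:fptas1dof} needs to be invoked only $O(1/\eps)$ times.

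First I would bound the orientation of $\ell^\star$. Since the $k$ disks of radius $r^\star$ centered on $\ell^\star$ cover $S$, both $a$ and $b$ lie within distance $r^\star$ of $\ell^\star$; as $|ab|=d$, the angle $\theta^\star$ between $\ell^\star$ and the segment $ab$ satisfies $\sin|\theta^\star| \le 2r^\star/d \le 2r_c/d$. The hypothesis $d \ge 3r_c$ guarantees $2r_c/d \le 2/3 < 1$, so this angle is well defined and, moreover, $\arcsin(2r_c/d) = O(r_c/d)$. Hence the orientation of $\ell^\star$ lies in an interval $I$ of directions around that of $ab$ whose half-width is $\beta := \arcsin(2r_c/d) = O(r_c/d)$.

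Next I would establish a Lipschitz bound on the fixed-orientation optimum. For a direction $\theta$, let $r(\theta)$ denote the minimum radius of $k$ disks centered on a line of orientation $\theta$ whose union covers $S$ (this is the quantity approximated by Theorem~\ref{th:fptas1dof}). I claim $r(\theta) \le r^\star + O(d)\cdot|\theta-\theta^\star|$ for $\theta$ near $\theta^\star$. To see this, take the optimal centers $c_1,\dots,c_k$ on $\ell^\star$, pick the reference point $o$ at the midpoint of their span, let $\ell$ be the line through $o$ of orientation $\theta$, and project each $c_j$ orthogonally to a center $c_j'$ on $\ell$. Each $|c_jc_j'|$ equals the distance from $c_j$ to $\ell$, which is at most $|c_jo|\sin|\theta-\theta^\star| = O(d)\,|\theta-\theta^\star|$ since the span of the centers is $O(d)$; by the triangle inequality every point covered by $D(c_j,r^\star)$ is then within $r^\star + O(d)|\theta-\theta^\star|$ of $c_j'$, so these disks certify the claimed bound on $r(\theta)$.

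Finally I would discretize. I sample directions across $I$ at angular spacing $\Delta := c\,\eps\, r_c/d$ for a suitable constant $c$; the number of samples is $2\beta/\Delta = O(1/\eps)$, where the crucial cancellation of $d$ uses $\beta=O(r_c/d)$, i.e. the skinniness hypothesis. For each sampled direction I run the fixed-orientation algorithm of Theorem~\ref{th:fptas1dof} with parameter $\eps/2$ in $O(\eps^{-1}n\log^2 n)$ time and keep the best disks; the total time is $O(1/\eps)\cdot O(\eps^{-1}n\log^2 n)=O(\eps^{-2}n\log^2 n)$. For correctness, the sample $\theta_i$ nearest $\theta^\star$ satisfies $|\theta_i-\theta^\star|\le \Delta/2$, so the Lipschitz bound gives $r(\theta_i)\le r^\star + O(\eps r_c) = (1+O(\eps))r^\star$ (using $r_c \le \sqrt2\, r^\star$), and Theorem~\ref{th:fptas1dof} then returns a solution of radius at most $(1+\eps/2)r(\theta_i) \le (1+\eps)r^\star$ once $c$ is chosen small enough. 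The main obstacle is obtaining the two geometric estimates with the correct dependence on $d$ and checking that the factors of $d$ cancel in the sample count; this is exactly where $d\ge 3r_c$ enters, both to keep $2r_c/d$ bounded away from $1$ and to make $\arcsin(2r_c/d)$ linear in $r_c/d$.
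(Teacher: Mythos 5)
Your proposal is correct and takes essentially the same approach as the paper: the skinniness hypothesis $d\geq 3r_c$ confines the optimal orientation to a range of width $O(r_c/d)$ around the direction of the diametral pair, so sampling $O(1/\eps)$ orientations at spacing $\Theta(\eps r_c/d)$ and perturbing the optimal centers onto the nearest sampled line (costing only $O(\eps r^\star)$ in radius, since $r_c\leq\sqrt2\,r^\star$) yields the claimed bound. The only differences are cosmetic: the paper parametrizes by slope after scaling $d=1$ and translates the centers vertically, while you parametrize by angle and project orthogonally.
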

\begin{proof}
Let $\lambda>0$ be a constant, to be specified later. We scale $S$ so that
$d=1$, and we choose a coordinate frame such that $a=(0,0)$ and $b=(1,0)$. When
$i \in \Z$, we denote $m_i=i\lambda\eps r_c$. For each $i$ such that $-6r_c
\leq m_i \leq 6r_c$, using the result of Section~\ref{sec:fixed_apx}, we
compute a $(1+\eps/3)$-approximation for lines with slope $m_i$, and we return
the covering with disks of smallest radius. As we consider only $O(1/\eps)$
slopes $m_i$, it takes total time $O(\eps^{-2}n\log^2 n)$. We will now prove
the correctness of this algorithm.

\epsfigure[\textwidth]{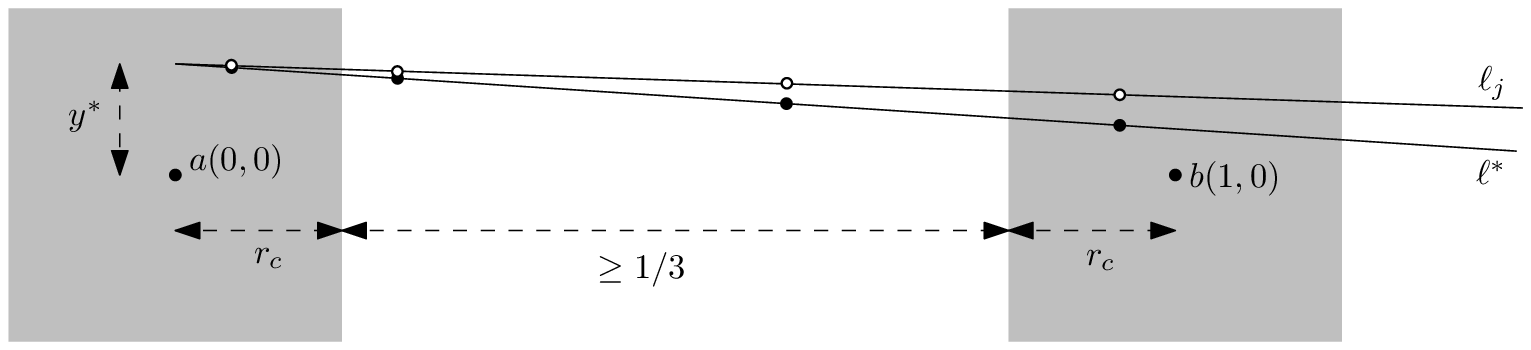} {The black dots along $\ell^\star$ represent
$C^\star$, and the white dots represent $C$.} {fig:skinny}

Let $\ell^\star$ be an optimal line; we write its equation $y=m^\star
x+y^\star$. Consider the two axis-parallel squares centered at $a$ and $b$ with edge-length $2r_c$. (See Figure~\ref{fig:skinny}.) Since $|ab|=d=1$ and $d \geq 3r_c$, a line with slope  outside the interval $[-6r_c,6r_c]$ cannot intersect both these squares, and thus this line is at distance more than $r_c$ from $a$ or $b$. Since $r^\star \leq r_c$, we know that $\ell^\star$ intersects both of these squares and thus $m^\star \in [-6r_c,6r_c]$. So there exists $j \in \Z$ such that $m_j \in [-6r_c,6r_c]$ and $|m_j-m^\star|\leq \lambda \eps r_c$.

We denote by $C^\star \subset \ell^\star$ the centers of $k$ disks with radius
$r^\star$ that cover $S$. All the points in $S$ have $x$-coordinates in
$[0,1]$, so we can assume that the points in $C^\star$ also have
$x$-coordinates in $[0,1]$. Let $x_1,\dots,x_k \in [0,1]$ denote the
$x$-coordinates of the points in $C^\star$. We have
\[C^\star=\{c^\star_i\mid c^\star_i=(x_i,m^\star x_i+y^\star) \mbox{ and } 1\leq i \leq k.\}\]
We introduce the point set $C$ obtained by translating each point of $C^\star$
vertically until it reaches the line $\ell_j$ with equation $y=m_jx+y^\star$.
Hence we have
\[C=\{c_i\mid c_i=(x_i,m_jx_i+y^\star) \mbox{ and } 1\leq i \leq k.\}\]
Notice that for all $i$, we have
\[|c^\star_ic_i| \leq |m^\star-m_j| \leq \lambda \eps r_c \leq \lambda \eps \sqrt{2} r^\star.\]
So choosing $\lambda=1/(3\sqrt{2})$, we get that $|c^\star_ic_i|\leq \eps r^\star/3$ and thus $S$ is covered by the disks with radius $(1+\eps/3)r^\star$
centered at $C \subset \ell_j$. So if we consider the $(1+\eps/3)$-approximate
radius $r_j$ that our algorithm returned for lines with slope $m_j$, we have
\[r_j \leq (1+\eps/3)(1+\eps/3)r^\star < (1+\eps) r^\star,\]
which completes our proof.
\end{proof}

In the following lemma, we handle the remaining case where $S$ is fat.

\epsfigure[10cm]{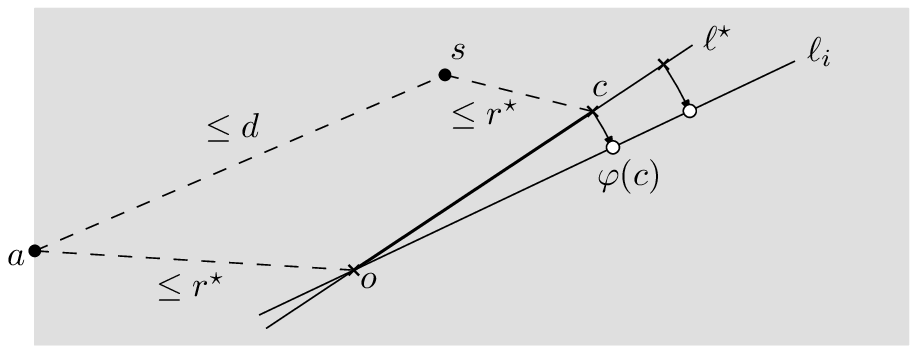} {Proof of Lemma~\ref{lem:fat}. The crosses along
$\ell^\star$ represent $C^\star$, and the white dots represent $C$.}{fig:fat}

\begin{lemma}\label{lem:fat}
We assume that $0 <\eps <1$. If $d \leq 3r_c$, then we can compute in time
$O(\eps^{-2}n\log^2 n)$ a set of $k$ disks with collinear centers and with
radius less than $(1+\eps)r^\star$, such that these disks cover $S$.
\end{lemma}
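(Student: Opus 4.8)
The plan is to mirror the structure of the skinny case (Lemma~\ref{lem:skinny}), but to sample \emph{orientations} rather than slopes. In the skinny case the diameter was large compared to $r_c$, which forced the slope of the optimal line into a bounded window and let us sample $O(1/\eps)$ slopes. Here the point set is fat, $d \le 3r_c$, so the optimal orientation can be arbitrary; instead I will exploit fatness to show that the optimal disk centers all lie in a region of radius $O(r^\star)$, so that a coarse angular sampling of $O(1/\eps)$ directions across $[0,\pi)$ already suffices. This bounded-region estimate is the one real obstacle; everything after it parallels the skinny case.

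First I would pin down the optimal configuration. Let $\ell^\star$ have orientation $\theta^\star$ and let $C^\star=\{c_1^\star,\dots,c_k^\star\}\subset\ell^\star$ be the centers of disks of radius $r^\star$ covering $S$. After discarding any disk covering no point (it may be placed anywhere), every $c_i^\star$ lies within distance $r^\star$ of some point of $S$. Since $\mathrm{diam}(S)=d$, every point of $S$ is within distance $d$ of the diameter endpoint $a\in S$, so
\[|c_i^\star-a|\le d+r^\star\le 3r_c+r^\star\le(1+3\sqrt2)\,r^\star,\]
using $d\le 3r_c$ and $r_c\le\sqrt2\,r^\star$ from Lemma~\ref{lem:constant_factor}. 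Thus all optimal centers sit within radius $O(r^\star)$ of the fixed point $a$.

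Next I would sample $O(1/\eps)$ orientations $\theta_1,\dots,\theta_N$ spaced by $\Delta\theta=\Theta(\eps)$ across $[0,\pi)$, pick the index $j$ with $|\theta_j-\theta^\star|\le\Delta\theta$, and rotate the whole configuration rigidly about $a$ by $\theta_j-\theta^\star$. This maps $\ell^\star$ to a line $\ell_j$ of orientation $\theta_j$ and maps $C^\star$ to collinear centers $C=\{c_i\}\subset\ell_j$ with
\[|c_i-c_i^\star|\le |c_i^\star-a|\,\Delta\theta\le(1+3\sqrt2)\,r^\star\,\Delta\theta.\]
Choosing the constant hidden in $\Delta\theta=\Theta(\eps)$ so that this is at most $\eps r^\star/3$, the triangle inequality shows that any point previously covered by $c_i^\star$ now lies within $r^\star+\eps r^\star/3$ of $c_i$. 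Hence the disks of radius $(1+\eps/3)r^\star$ centered at $C\subset\ell_j$ still cover $S$, so the optimal radius for orientation $\theta_j$ is at most $(1+\eps/3)r^\star$.

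Finally, for each sampled orientation I would run the fixed-orientation $(1+\eps/3)$-approximation of Theorem~\ref{th:fptas1dof} (already invoked for arbitrary fixed directions in the skinny proof) and return the covering of smallest radius. For the good index $j$ this returns a radius $r_j\le(1+\eps/3)(1+\eps/3)r^\star<(1+\eps)r^\star$ whenever $\eps<1$, so the output is a $(1+\eps)$-approximation. Each of the $O(1/\eps)$ orientations costs $O(\eps^{-1}n\log^2 n)$ by Theorem~\ref{th:fptas1dof}, giving the claimed total of $O(\eps^{-2}n\log^2 n)$. The composition of the two $(1+\eps/3)$ factors and the verification that $\Delta\theta=\Theta(\eps)$ yields the required displacement bound are routine once the centers are confined to radius $O(r^\star)$ around $a$.
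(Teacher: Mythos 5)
Your proposal is correct and follows essentially the same route as the paper: sample $O(1/\eps)$ orientations, use fatness ($d\leq 3r_c\leq 3\sqrt{2}\,r^\star$) to confine the optimal centers to an $O(r^\star)$-ball around a fixed point, rotate the optimal configuration to the nearest sampled orientation so each center moves by at most $\eps r^\star/3$, and compose the two $(1+\eps/3)$ factors. The only cosmetic difference is that you rotate about the diameter endpoint $a$ whereas the paper rotates about a center $o\in C^\star$ near $a$; both give the same $O(r^\star)\cdot\Theta(\eps)$ displacement bound.
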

\begin{proof}
Let $\lambda>0$ be a constant, to be specified later. For each integer $i \geq
0$ such that $\theta_i=i\lambda\eps \leq \pi$, using the result of
Section~\ref{sec:fixed_apx}, we compute a $(1+\eps/3)$-approximation for lines
making an angle $\theta_i$ with horizontal. Among all these
$(1+\eps/3)$-approximate solution, we return one with minimum radius. As there
are $O(1/\eps)$ angles $\theta_i$, it takes total time $O(\eps^{-2}n\log^2 n)$.
We will now prove the correctness of this algorithm.

We denote by $C^\star$ a set of centers in an exact solution to our problem,
and we denote by $\theta^\star \in [0,\pi)$ the angle that the corresponding
optimal line $\ell^\star$ makes with horizontal. Since $r^\star \leq d$, there
exists a point $o \in C^\star$ that is at distance at most $d$ from $a$. We
choose $i$ such that $|\theta_i-\theta^\star|$ is minimized, hence
$|\theta_i-\theta^\star| \leq \lambda\eps$. We consider the line $\ell_i$
through $o$ making the angle $\theta_i$ with horizontal. We denote by $\varphi$
the rotation around $o$ with angle $\theta_i-\theta^\star$; hence
$\varphi(\ell^\star)=\ell_i$.

We now prove that the disks with radius $(1+\eps/3)r^\star$ centered at
$\varphi(C^\star)$ cover $S$. So let $s$ denote a point in $S$. There exists a
center $c \in C^\star$ such that $|cs| \leq r^\star$. We have $|oc| \leq
|oa|+|as|+|sc|$ and thus $|oc| \leq 3d$. It follows that $|c\varphi(c)| \leq 3d
|\theta_i-\theta^\star|$ and thus $|c\varphi(c)| \leq 3d\lambda\eps$. Then we
have $|\varphi(c)s|\leq r^\star+3d\lambda\eps$, and since $d \leq 3r_c$ and
$r_c \leq \sqrt{2}r^\star$, we get $|\varphi(c)s| \leq
(1+9\sqrt{2}\lambda\eps)r^\star$. Choosing $\lambda=1/(27\sqrt{2})$, we get
that the disks with radius $(1+\eps/3)r^\star$ centered at $\varphi(C^\star)$
cover $S$.

Consider the approximate $k$-center $C_i$ that was computed for lines making an
angle $\theta_i$ with horizontal, and $r_i$ the corresponding radius. We have
$r_i \leq (1+\eps/3)(1+\eps/3)r^\star < (1+\eps)r^\star$, which proves that
$C_i$ is a $(1+\eps)$-approximate $k$ center for lines with arbitrary
direction.
\end{proof}

Combining lemmas~\ref{lem:skinny} and~\ref{lem:fat}, we obtain the following
fully polynomial-time approximation scheme:
\begin{theorem}\label{th:fptas}
Let $S$ be a set of $n$ points in $\R^2$. Let $k \in \N$ and $0 <\eps <1$. Let
$r^\star$ denote the minimum radius such that there exists a set of $k$ disks
with radius $r^\star$ and collinear centers that cover $S$. We can compute in
time $O(\eps^{-2}n\log^2 n)$ a set of $k$ disks with radius at most
$(1+\eps)r^\star$ and collinear centers that cover $S$.
\end{theorem}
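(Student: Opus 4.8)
The plan is to combine Lemmas~\ref{lem:skinny} and~\ref{lem:fat} by first establishing a constant-factor approximation and then dispatching to the appropriate refinement depending on the aspect ratio of~$S$. Concretely, I would begin by invoking Lemma~\ref{lem:constant_factor} to compute in $O(n\log n)$ time a radius $r_c$ with $r^\star \leq r_c \leq \sqrt{2}\,r^\star$. I would then compute the diameter $d$ of $S$ together with a diametral pair $a,b \in S$; as remarked just before Lemma~\ref{lem:skinny}, this takes $O(n\log n)$ time via the convex hull and rotating calipers. These two quantities, $r_c$ and $d$, are precisely the inputs required by the case analysis.

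The key observation is that the hypotheses of the two lemmas, namely $d \geq 3r_c$ and $d \leq 3r_c$, are complementary and exhaust all possibilities. So the algorithm simply tests whether $d \geq 3r_c$: if so, it runs the procedure of Lemma~\ref{lem:skinny}, which samples $O(1/\eps)$ slopes and applies the fixed-orientation approximation of Section~\ref{sec:fixed_apx} to each; otherwise it runs the procedure of Lemma~\ref{lem:fat}, which instead samples $O(1/\eps)$ angles. In either branch, the cited lemma guarantees a set of $k$ disks with collinear centers and radius strictly less than $(1+\eps)r^\star$ covering $S$, and each branch runs in $O(\eps^{-2}n\log^2 n)$ time. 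Since the constant-factor and diameter computations are dominated by this bound, the overall running time is $O(\eps^{-2}n\log^2 n)$, matching the claimed complexity and the radius guarantee.

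There is essentially no hard mathematical step remaining, since both lemmas have already been proved: the theorem is a dispatching argument over a partition of the parameter space. The only point requiring a line of care is confirming that the two cases genuinely cover the boundary value $d = 3r_c$ without a gap, which holds because both lemmas use non-strict inequalities ($d \geq 3r_c$ and $d \leq 3r_c$) and thus overlap at the boundary rather than leaving it uncovered. I would also note explicitly that $r_c$ is only used to decide which branch to take and to bound running time; it is not returned as the final radius, so the factor-$\sqrt{2}$ slack in $r_c$ does not degrade the final $(1+\eps)$ guarantee. The main conceptual content is therefore the realization that ``skinny'' and ``fat'' point sets admit different but symmetric sampling strategies (slopes versus angles), and that one constant-factor estimate suffices to decide between them.
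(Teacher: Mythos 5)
Your proposal is correct and matches the paper's argument, which likewise obtains the theorem by computing $r_c$ via Lemma~\ref{lem:constant_factor}, computing the diameter, and dispatching to Lemma~\ref{lem:skinny} or Lemma~\ref{lem:fat} according to whether $d \geq 3r_c$ or $d \leq 3r_c$. The paper states this combination in a single sentence; your write-up simply makes the same dispatching step explicit.
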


\end{document}